\newtheorem{thm}{Theorem}[section]
\newtheorem{dfn}{Definition}[section]
\newtheorem{ex}{Example}[section]
\newtheorem{remark}{Remark}[section]
\newtheorem{proposition}{Proposition}[section]
\newtheorem{corollary}{Corollary}[section]
\newtheorem{claim}{Claim}[section]
\newenvironment{proof}{\mbox{\bf Proof.}}{\mbox{$\dashv$}\bigskip}
\begin{document}
\begin{center}
{\Large\bf Disappointment in Social Choice Protocols}\\
\vspace{.25in}
{ Mohammad Ali Javidian, Rasoul Ramezanian\footnote{Corresponding Author}}\\
 {Department of Mathematics Sharif University of Technology, Tehran, Iran}\\
{Complex and Multi Agent Systems Lab}\\
ali.javidian@gmail.com, ramezanian@sharif.edu
\end{center}
\begin{abstract}
\noindent Social choice theory is a theoretical framework for analysis of combining individual preferences, interests, or welfares to reach a collective decision or social welfare in some sense. We introduce a new criterion for social choice protocols called "social disappointment". Social disappointment happens when the outcome of a voting system occurs for those alternatives which are at the end of at least half of individual preference profiles. Here we introduce some protocols that prevent social disappointment and prove an impossibility theorem based on this key concept. \vspace*{1 cm}

\end{abstract}
\section{Introduction}

In social sciences, we are facing two kinds of social choices: voting which is used to make a political decision and market mechanism as a tool to make an economic decision\cite{A1}. Here, we are merely concerned with the former.

The theory underlying voting systems is known as social choice theory and is concerned with the design and analysis of methods for collective decision making\cite{Suz}. Voting procedures are among the most important methods for collective decision
making. In this paper, our attention is on voting procedures. Voting procedures
focus on the aggregation of individuals' preferences to produce collective decisions.
In practice, a voting procedure is characterized by ballot responses and the way ballots are tallied to determine winners. Voters are assumed to have clear preferences
over candidates (alternatives) and attempt to maximize satisfaction with the election outcome by their ballot responses. Voting procedures are formalized by social
choice functions, which map ballot response profiles into election outcomes(see\cite{BF}, page:175).

We use a broad class of social choice functions such as Condorcet method, Plurality rule, Hare system, Borda count, Sequential Pairwise Voting with a Fixed
Agenda (Seq. Pairs), and Dictatorship. Condorcet method is typically attributed
to the Marquis de Condorcet (1743-1794); However, it dates back to Ramon Llull
in the thirteenth century([28] p. 6). Hare procedure was introduced by Thomas
Hare in 1861, and is also known by names such as the "single transferable vote system"(STV) or "instant runoff voting" (\cite{Taylor} p. 7). Jean Charles Chevalier de Borda
(1733-99) in 1781 \cite{Borda} introduced an aggregation procedure known as Borda count.
Interestingly, recent historical work by McLean, Urken (1993) \cite{Mc}, and Pukelsheim
(unpublished) reveals that Bordas system had been explicitly described in 1433 by
Nicholas of Cusa (1401-64), a Renaissance scholar interested in the question of how
German kings should be elected (\cite{Taylor2} p. 9). For more details and examples, see
\cite{Taylor} sec. 1.3.

There are five desirable properties that relate to such procedures: Always-A-Winner Condition (AAW), Condorcet Winner Criterion (CWC), Pareto Condition,
Monotonicity (Mono), Independence of Irrelevant Alternatives (IIA). A social choice
procedure is said to satisfy AAWcondition if every sequence of individual preference
lists produces at least one winner. An alternative 'x' is said to be a Condorcet winner
if it is the unique winner in Condorcets method. A social choice procedure is said to
satisfy CWC provided that-if there is a Condorcet winner-then it alone is the social
choice. A social choice procedure is said to satisfy the Pareto condition (or just
Pareto) if the following holds for every pair 'x' and 'y' of alternatives: If everyone
prefers 'x' to 'y', then y is not a social choice. A social choice procedure is said
to be monotone provided that the following holds for every alternative 'x': If 'x' is
the social choice (or tied for such) and someone changes his or her preference list
by moving 'x' up one spot, then 'x' should still be the social choice (or tied for
such). A social choice procedure has IIA condition the social choice set includes 'x'
but not 'y', and one or more voters change their preferences, but no one changes
his or her mind about whether 'x' is preferred to 'y' or 'y' to 'x', then the social
choice set should not change so as to include 'y'. The condition of "independence
of irrelevant alternatives" was first used by Arrow \cite{A1} in 1951. For more details of
these properties see \cite{Taylor}, sec. 1.4.

In this paper, we propose a new property for social choice procedures called \textbf{social disappointment in voting} which is a situation that happens when the winner is the least favorable candidate for at least half of the voters. See the definition and an example of social disappointment in section 2.

The rest of the paper is organized as follows. In section 2 we introduce the concept of social disappointment and the Least Public Resentment procedure (L.P.R) which is a procedure that prevents social disappointment in voting. In section 3 in imitation of Taylor's work (see\cite{Taylor3} also\cite{Taylor} pp. 28-31) we present an impossibility theorem based on social disappointment in voting.

\textbf{Notice}: In this paper, we will follow the notation and basic results of Taylor and Pacelli(see\cite{Taylor}, chapter:1).


\section{Social Disappointment}\label{sec1}

We start explaining the social disappointment property by the following example.
\begin{ex}
Consider the following situation in which there are four Dutchmen, three Germans, and two Frenchmen who have to decide which drink will be served for lunch (only a single drink will be served to all).\begin{center}
\newcommand{\head}[1]{\textnormal{\text{#1}}}
\begin{table}[!hbp]
\centering
\begin{tabular}{ccc}
\head{4} & \head{3} & \head{2} \\
\hline
Milk& Beer& Wine \\
Wine & Wine& Beer \\
Beer& Milk & Milk
\end{tabular}
\end{table}
\end{center}
Now, which drink should be served based on these individual preferences? Milk
could be chosen since it has the most agents ranking it first. Milk is the winner
according to the plurality rule, which only considers how often each alternative is
ranked in first place. However, a majority of agents will be dissatisfied with this
choice as they prefer any other drink to Milk(see\cite{BCE}, pp. 3,4).
\end{ex}
\begin{dfn}
\textbf{Social disappointment in voting} happens when the outcome of a voting system (for 3 or more alternatives) occurs for those alternatives which are at the end of at least half of individual preference profiles.
\end{dfn}
Now the question is for which protocols social disappointment(S.D) may hap-
pen? The answer is given in the following table (where a 'yes' indicates that social
disappointment may happen).
\begin{center}
\newcommand{\head}[1]{\textnormal{\text{#1}}}
\begin{table}[!hbp]
\centering
\begin{tabular}{c|c|c|c|c|c|c}
\head{} & \head{Plurality} & \head{Borda} & \head{Hare} & \head{Seq. Pairs} & \head{Dictator} & \head{Condorcet} \\
\hline
S.D & Yes & Yes & Yes & Yes & Yes & Yes
\end{tabular}
\end{table}
\end{center}
We prove the seven claims in Table 1.
\begin{claim}
The Plurality rule does not prevent social disappointment.
\end{claim}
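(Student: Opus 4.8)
The plan is to exhibit a single preference profile on which the Plurality rule outputs an alternative that sits at the bottom of at least half of the individual lists; since social disappointment is an existential phenomenon, one such witness suffices. The natural candidate is the drink profile of Example 2.1 itself, which already involves $3$ alternatives, as the definition of social disappointment requires.

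First I would record the Plurality tally for that profile: Milk receives $4$ first-place votes, Beer receives $3$, and Wine receives $2$, so Milk is the unique Plurality winner. Next I would count, for this declared winner, how many voters rank Milk last: the $3$ Germans and the $2$ Frenchmen all place Milk at the bottom of their lists, which is $5$ of the $9$ voters. Since $5 \geq 9/2$, the alternative chosen by Plurality is ranked last in at least half of the individual preference profiles, so the outcome is precisely an instance of social disappointment. This establishes the claim.

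The only point that needs a little care is the meaning of "at least half" when the electorate has odd size: with $9$ voters one must verify $5 \ge 4.5$ rather than look for an exact half, and one should also confirm that the winner is genuinely unique, so that no tie-breaking rule can alter the conclusion. Both checks are immediate here, so I do not anticipate any real obstacle; should an even electorate be preferred, the same construction with group sizes $4,3,3$ (Milk, Beer, Wine on top respectively, and Milk last for the latter two groups) works identically, giving $6 \ge 5$ last-place placements for the Plurality winner. Hence the Plurality rule does not prevent social disappointment.
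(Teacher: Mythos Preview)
Your proposal is correct and follows exactly the paper's own approach: the paper simply refers back to Example~2.1, and you have spelled out the Plurality tally and the last-place count for that same profile. The extra care about ``at least half'' and the alternative even-electorate construction are sound but unnecessary embellishments.
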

\begin{proof}
See Example 2.1.
\end{proof}
\begin{claim}
The Borda count does not prevent social disappointment.
\end{claim}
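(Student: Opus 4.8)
The plan is to exhibit an explicit preference profile on three alternatives, compute the Borda scores directly, and observe that an alternative sitting at the bottom of at least half the ballots is nonetheless among the Borda winners. Concretely, take alternatives $\{a,b,c\}$ and four voters with preference lists $a\succ b\succ c$, $b\succ a\succ c$, $c\succ a\succ b$, and $c\succ b\succ a$. Using the standard Borda assignment of $2,1,0$ points for first, second, and third place, one tallies $a$ at $2+1+1+0=4$, $b$ at $1+2+0+1=4$, and $c$ at $0+0+2+2=4$, so $c$ is (tied for) the Borda outcome even though $c$ is ranked last on two of the four ballots, i.e.\ on at least half of the individual preference profiles. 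By Definition 2.1 this is precisely a case of social disappointment, so the Borda count does not prevent it.

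The one subtlety — and the step I expect to be the real obstacle — is that one cannot hope to make such a disappointing alternative the \emph{unique} winner, so a witness of the above tie-type is essentially forced. A short counting argument shows this: if $x$ is last on $k$ ballots and not last on $m\le k$ of the $n=k+m$ ballots, then $x$ earns at most $(p-1)m$ points (with $p\ge 3$ the number of alternatives, the best case being $x$ first on each of those $m$ ballots), while the remaining $p-1$ alternatives collectively earn at least $k\cdot\frac{p(p-1)}{2}+m\cdot\frac{(p-1)(p-2)}{2}$ points, so by pigeonhole some competitor scores at least $k\cdot\frac{p}{2}+m\cdot\frac{p-2}{2}$. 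Comparing the two bounds forces $m>k$, contradicting $k\ge m$; hence any valid witness must involve a tie, which is exactly the shape of the profile above.

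It therefore remains only to invoke the convention we follow (Taylor and Pacelli, \cite{Taylor}, Chapter 1), under which an alternative tied for the win counts as being among the social choice; then the outcome of the Borda count on the profile above genuinely ``occurs for'' the disappointing alternative $c$, completing the argument. No further computation is needed, and if an example with more voters is preferred the same three-alternative, four-voter template can be rescaled by duplicating each voter, which preserves both the tie and the property that $c$ is last on exactly half the ballots.
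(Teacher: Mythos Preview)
Your argument is correct and follows essentially the same approach as the paper: exhibit a small three-alternative profile on four voters in which all Borda scores coincide, so that an alternative ranked last by exactly half the electorate is among the winners (the paper uses the even simpler symmetric profile $a\!\succ\! b\!\succ\! c$ twice and $c\!\succ\! b\!\succ\! a$ twice, with the same $4{-}4{-}4$ tally). Your additional pigeonhole argument that a disappointing alternative can never be the \emph{unique} Borda winner is a nice supplementary observation not present in the paper, but it is not needed for the claim itself.
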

\begin{proof}
Consider the three alternatives 'a', 'b', and 'c' and the following sequence of two preference lists:
\begin{center}
\newcommand{\head}[1]{\textnormal{\text{#1}}}
\begin{table}[!hbp]
\centering
\begin{tabular}{cc}
Voters & Voters \\
1and2 & 3and4 \\
\hline
a& c \\
b& b \\
c & a
\end{tabular}
\end{table}
\end{center}
Alternatives 'a', 'b' and 'c' are the social choice when the Borda count procedure is used. Although 'a' is the social choice (also 'c'), but it is at the bottom of individual preference lists and so social disappointment has taken place.
\end{proof}
\begin{claim}
The Hare procedure does not prevent social disappointment.
\end{claim}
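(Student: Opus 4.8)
The plan is to prove this claim exactly as the Borda claim was proved: by exhibiting an explicit profile on three alternatives for which the Hare winner set contains an alternative that is ranked last on at least half of the ballots. In direct analogy with the Borda example, I take three alternatives $a$, $b$, $c$ and four voters: two voters whose list is $a$ (first), $b$ (second), $c$ (last), and two voters whose list is $c$, $b$, $a$ read from top to bottom.

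The verification is a two-line run of the Hare procedure. In the first round the first-place tallies are $2$ for $a$, $2$ for $c$, and $0$ for $b$, so $b$ is the unique alternative with the fewest first-place votes and is eliminated. On the resulting two-candidate profile the blocks become $a$ over $c$ (two voters) and $c$ over $a$ (two voters), so $a$ and $c$ are tied in first-place votes and neither can be removed; hence the Hare procedure declares $\{a,c\}$ the winning set. Since $a$ sits at the bottom of the two ballots $c$, $b$, $a$, it is ranked last by exactly half of the four voters, so a social choice (namely $a$) is bottom-ranked by at least half of the electorate, which is precisely social disappointment.

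The point that deserves a sentence of care — and is the only genuine obstacle — is the tie-breaking convention for Hare. No alternative ever achieves a strict majority of first-place votes in this profile (each of $a$ and $c$ holds two of the four), so, following Taylor and Pacelli, one runs eliminations until the surviving alternatives are tied and then declares them all winners; I would state this convention explicitly before invoking it. The same observation also explains why a balanced $2$--$2$ split is used rather than a decisive majority: if $x$ were the unique Hare winner it would have to strictly beat its final opponent $y$ in the last pairwise comparison, yet every voter who ranks $x$ last prefers $y$ to $x$, so $x$ could collect at most half the votes in that round. Thus a tie is unavoidable for any such witness, and the construction above is essentially the minimal one.
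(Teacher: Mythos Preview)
Your construction is correct and uses the same strategy as the paper: exhibit a concrete three-alternative profile in which a Hare winner is bottom-ranked by at least half of the electorate. The paper's witness is a ten-voter profile with blocks of sizes $4$, $3$, $3$ (rankings $a\,b\,c$, $c\,b\,a$, $b\,c\,a$): there $b$ and $c$ tie for fewest first-place votes and are simultaneously eliminated, so $a$ emerges as the \emph{unique} Hare winner while sitting last on six of the ten ballots. Your four-voter example reaches the same conclusion more economically by pushing the tie to the final stage instead.

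One caveat about your last paragraph: the claim that ``a tie is unavoidable for any such witness'' is not correct. Your argument tacitly assumes the final Hare round is a head-to-head comparison, but that fails once simultaneous eliminations are allowed --- as the paper's own profile shows (and even a four-voter variant such as $a\,b\,c$, $a\,c\,b$, $b\,c\,a$, $c\,b\,a$ gives $a$ as the unique winner with $a$ last on half the ballots). This does not affect the validity of your main construction, but you should drop or amend that side remark.
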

\begin{proof}
Consider the three alternatives 'a', 'b', and 'c' and the following sequence of ten preference lists grouped into voting blocks of size four, three, and two:
\begin{center}
\newcommand{\head}[1]{\textnormal{\text{#1}}}
\begin{table}[!]
\centering
\begin{tabular}{ccc}
Voters & Voters& Voters \\
1-4 & 5-7 & 8-10 \\
\hline
a& c & b \\
b& b & c \\
c & a & a
\end{tabular}
\end{table}
\end{center}
Alternative 'a' is the social choice when the Hare system is used. Although 'a' is the social choice, but it is at the bottom of individual preference lists and so social disappointment has taken place.
\end{proof}
\begin{claim}
Sequential pairwise voting with a fixed agenda does not prevent social disappointment.
\end{claim}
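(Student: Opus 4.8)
The plan is to do for sequential pairwise voting what Claims~2.2 and~2.3 did for Borda and Hare: exhibit one small preference profile together with a fixed agenda on which the procedure returns, among its winners, an alternative that at least half of the voters rank last. I would work with three alternatives $a$, $b$, $c$.

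First I would record a structural observation that pins down the shape of any such example. If $x$ is a winner of sequential pairwise voting, then $x$ reaches the final head-to-head comparison against some alternative $y$ and is not defeated there. Every voter who ranks $x$ last prefers $y$ to $x$; so if $x$ strictly beat $y$, then strictly fewer than half the voters rank $x$ last, and no social disappointment occurs. Hence in any witnessing example the final comparison must end in an exact tie (resolved in favor of $x$), and in particular the number of voters must be even. This tells me precisely what to build: a profile in which the eventual winner $a$ only \emph{ties} its last opponent while still being bottom-ranked by half the electorate.

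Concretely I would take four voters: voters $1$ and $2$ with list $b \succ c \succ a$, voter $3$ with $a \succ b \succ c$, and voter $4$ with $c \succ a \succ b$, under the agenda $(b,c,a)$. The verification is routine: $b$ beats $c$ in the first comparison (three of the four voters place $b$ above $c$), so $b$ faces $a$ in the second comparison, and $a$ versus $b$ is a $2$--$2$ tie; thus $a$ belongs to the social choice set. Since voters $1$ and $2$ --- exactly half of the electorate --- rank $a$ last, social disappointment has occurred.

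The step I expect to be the main obstacle is juggling all the pairwise counts simultaneously: $a$ must tie its final opponent, that opponent must have genuinely survived the earlier stage of the agenda, and $a$ must be last on exactly half of the lists, and these demands pull against one another --- which is why a small even electorate with one ``cyclic'' voter (here voter $4$) is essentially forced. A secondary point is to fix the tie-breaking convention for sequential pairwise voting; the cleanest route is to use the same reading already in force for Claims~2.2 and~2.3, under which a tied comparison simply places both alternatives in the outcome, so that $a$'s membership in the winner set --- and hence the disappointment --- is immediate.
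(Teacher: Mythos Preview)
Your proposal is correct and follows essentially the same approach as the paper: a four-voter, three-alternative profile under the agenda $(b,c,a)$ in which $b$ beats $c$ in the first round and then ties $a$ in the final round, leaving $a$ in the social-choice set while exactly half the voters rank $a$ last. Your profile differs from the paper's only by a relabeling of voter blocks (and one voter's list), and your added structural remark---that the final comparison must be a tie for social disappointment to arise---is a nice bonus the paper omits.
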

\begin{proof}
Consider the three alternatives 'b', 'c', and a and suppose that this ordering of the alternatives is also the agenda. Consider the following sequence of four preference lists grouped into voting blocks of size two, one, and one:
\begin{center}
\newcommand{\head}[1]{\textnormal{\text{#1}}}
\begin{table}[!hbp]
\centering
\begin{tabular}{ccc}
Voters & Voter& Voter \\
1and2 & 3 & 4 \\
\hline
a& b & c \\
b& c & b \\
c & a & a
\end{tabular}
\end{table}
\end{center}
Alternatives 'a', 'b' are the social choice when the Hare system is used. Although 'a' is the social choice, but it is at the bottom of individual preference lists and so social disappointment has taken place.
\end{proof}
\vskip 1cm
\begin{claim}
A dictatorship does not prevent social disappointment.
\end{claim}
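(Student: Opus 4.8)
The plan is to do exactly what was done for the other five procedures: exhibit a single preference profile on which the procedure forces a socially disappointing outcome. First I would recall the defining feature of a dictatorship: there is a distinguished voter, the dictator, such that for every profile the social choice set is precisely the singleton containing the alternative the dictator has placed at the top of his or her list; the remaining voters' lists have no effect on the outcome.

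With that observation the construction is immediate. Take the three alternatives $a$, $b$, $c$ and three voters, with voter $1$ the dictator. Let voter $1$ report the list $a \succ b \succ c$, and let voters $2$ and $3$ each report $b \succ c \succ a$. Since the dictator ranks $a$ first, $a$ is the (unique) social choice. But $a$ occupies the bottom position on the lists of voters $2$ and $3$, i.e., on two of the three individual preference profiles, which is at least half of them; hence social disappointment has occurred.

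To close, I would verify the two clauses of the definition of social disappointment explicitly: there are $3$ alternatives, so the "$3$ or more alternatives" hypothesis is met, and the winning alternative $a$ is ranked last on at least half of the individual lists. The only point demanding any care at all is the count against the "at least half" threshold, which is why I use three voters (giving a $2/3$ majority placing $a$ last) rather than the bare two-voter profile; but this is a bookkeeping detail, not a genuine obstacle, and the argument is no deeper than those for the preceding claims.
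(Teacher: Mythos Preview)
Your proposal is correct and takes essentially the same approach as the paper: exhibit a small profile in which the dictator's top choice sits at the bottom of at least half of the lists. The paper uses four voters (two copies of $a\succ b\succ c$ and two of $c\succ b\succ a$) with voter~1 as dictator, so that $a$ is last on exactly half the lists; your three-voter version with a strict $2/3$ majority placing $a$ last works just as well and is, if anything, slightly cleaner.
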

\begin{proof}
Consider the three alternatives 'a', 'b', and 'c' and the following three preference lists:
\begin{center}
\newcommand{\head}[1]{\textnormal{\text{#1}}}
\begin{table}[!hbp]
\centering
\begin{tabular}{ccc}
Voters & Voter& Voter \\
1and2 & 3 & 4 \\
\hline
a& c & c \\
b& b & b \\
c & a & a
\end{tabular}
\end{table}
\end{center}
Assume that Voter 1 is the dictator. Then 'a' is the social choice, but obviously social disappointment has happened.
\end{proof}
\begin{claim}
If in Condorcet method more than half of voters put 'a' at the bottom
of individual preference lists then for sure 'a' would not be the social choice and
in this case social disappointment would not occur. But if the number of voters is
even and precisely half of voters put 'a' in the end of their lists, one of these two
possibilities happens:
\begin{itemize}
\item 1. Not all voters in the other half put 'a' at the top of their lists, which in this
case, 'a' definitely does not hold the social choice and social disappointment
occures.
\end{itemize}
\begin{itemize}
\item 2. All the voters in the other half also put 'a' at the top of their lists, which in this case 'a' is definitely  in the set of social choice and therefore we face with the social disappointment.
\end{itemize}
\end{claim}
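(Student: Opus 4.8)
The plan is to first pin down the version of Condorcet's method under which the statement makes sense, and then to check each of the three situations separately. Following \cite{Taylor}, I take the \emph{social choice set} produced by Condorcet's method on a preference profile to consist of every alternative that is not \emph{strictly defeated} in a head-to-head majority comparison; equivalently, $x$ is in the social choice set exactly when, for every $y \neq x$, at least as many voters rank $x$ above $y$ as rank $y$ above $x$. (With the stricter convention requiring $x$ to \emph{beat} every $y$, sub-case~2 below would fail, since there $a$ only ties the other alternatives; the weaker convention is what is needed to keep $a$ in the set.) Throughout, let $N$ be the number of voters; recall also that there are at least three alternatives, so some $b \neq a$ always exists.

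First I would dispose of the case in which more than $N/2$ voters rank $a$ last. Fix any $b \neq a$. Every voter who puts $a$ last puts $b$ above $a$, and there are more than $N/2$ such voters, so $b$ strictly defeats $a$; hence $a$ is not in the social choice set. Moreover, each voter puts exactly one alternative last, so for every $c \neq a$ the number of ballots ending in $c$ is at most the number not ending in $a$, which is strictly less than $N/2$. Thus $a$ is the only alternative sitting at the bottom of at least half the ballots, and $a$ is not chosen; so whatever the social choice is, it is not an alternative ranked last by at least half the voters, and social disappointment cannot occur. This establishes the first sentence of the claim.

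Next I would treat the ``exactly half'' situation, writing $N = 2m$ with exactly $m$ ballots ending in $a$. In sub-case~2 the other $m$ ballots all begin with $a$, so for each $b \neq a$ precisely $m$ voters rank $a$ above $b$ and precisely $m$ rank $b$ above $a$: a tie. Hence $a$ is strictly defeated by no one, so $a$ belongs to the social choice set, while $a$ lies at the bottom of $m = N/2$, i.e.\ of at least half, of the ballots --- social disappointment occurs. In sub-case~1 some one of the other $m$ ballots does not begin with $a$, hence ranks some $b \neq a$ above $a$; adding the $m$ ballots ending in $a$, at least $m+1 > N/2$ voters rank $b$ above $a$, so $b$ strictly defeats $a$ and $a$ is not in the social choice set. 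This gives ``$a$ does not hold the social choice'' in sub-case~1.

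The hardest part will be the social-disappointment verdict inside sub-case~1, where I do not think the claim is correct as literally stated. Knocking $a$ out is immediate, but whether some \emph{other} alternative that is itself ranked last on at least half the ballots can survive in the social choice set is not forced by the hypotheses. The counting step above shows such an alternative $c$ would have to be ranked last by exactly the $m$ voters who do not rank $a$ last, and the two contests $c$ versus $a$ and $c$ versus $b$ then depend on how those ballots order the remaining alternatives: with three alternatives the profile consisting of $m$ ballots $c\,b\,a$ and $m$ ballots $b\,a\,c$ puts $c$ in a tie with both $a$ and $b$, so $c$ is retained and social disappointment does occur through $c$; but one can just as easily build a profile with exactly $m$ ballots ending in $a$, not all of the other half beginning with $a$, in which every surviving alternative is ranked last on fewer than half the ballots, so that it does not. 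Accordingly I would either weaken sub-case~1 to the assertion that $a$ itself is not the social choice, or add the one-line counting remark above as the real content, namely that the boundary configuration ``exactly half the ballots ending in $a$ and all the rest beginning with $a$'' is the unique way Condorcet's method can deliver $a$ as a socially disappointing winner.
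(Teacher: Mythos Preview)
Your argument is correct and far more careful than what the paper offers: the paper's entire proof is the single line ``It is concluded from the definitions,'' with no computation or case analysis at all. Everything you do --- fixing the right convention for Condorcet's social choice set (alternatives not strictly majority-defeated), the counting argument that no $c\neq a$ can also sit at the bottom of half the ballots when more than half end in $a$, and the tie computation in sub-case~2 --- is work the paper simply omits.

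Your critique of sub-case~1 is also well taken. As literally written, the claim asserts that social disappointment \emph{does} occur in sub-case~1, and you are right that this is not forced: your three-alternative profile with $m$ ballots $c\,b\,a$ and $m$ ballots $b\,a\,c$ shows it can occur via $c$, while a profile such as two ballots $b\,c\,a$, one ballot $a\,b\,c$, one ballot $c\,a\,b$ lands in sub-case~1 with unique Condorcet social choice $\{b\}$ and no disappointment. The paper never engages with this, so the one-line proof does not actually establish the sentence it is attached to. Your suggested repair --- reading sub-case~1 as only asserting that $a$ itself is excluded, and isolating sub-case~2 as the unique configuration in which Condorcet returns $a$ as a socially disappointing winner --- is exactly the right way to salvage the claim, and is more content than the paper's proof contains.
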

\begin{proof}
It is concluded from the definitions.
\end{proof}
\begin{remark}
Regarding case 2 in claim 2.6 if we have only three alternatives, the set of social choice certainly have more than one member.
\end{remark}
\subsection{The least public resentment procedure (L.P.R):}
We observed that none of the famous procedures listed in Table 1 prevents social disappointment.
We introduce a procedure which prevents social disappointment in voting, called The least public resentment procedure (L.P.R). In this procedure
we begin by deleting the alternative or alternatives occurring at bottom of the most lists. At this stage we have lists that are at least one alternative shorter than the lists we started with. Now, we simply repeat this procedure of deleting the least public resentment alternative or alternatives. The alternative(s) deleted last is declared as the winner.
\begin{ex}
Consider Example 2.1, we decide which alternative occurs at the bottom of the most lists and delete it from all the lists. Milk is deleted from each list leaving the following:
\begin{center}
\newcommand{\head}[1]{\textnormal{\text{#1}}}
\begin{table}[!hbp]
\centering
\begin{tabular}{ccc}
\head{4} & \head{3} & \head{2} \\
\hline
Wine& Beer& Wine \\
Beer&Wine & Beer
\end{tabular}
\end{table}
\end{center}
Now, Beer occurs at the bottom of six of the nine lists, and thus is eliminated. Hence, Wine is the social choice when the L.P.R is used.
\end{ex}
Which properties does this procedure satisfy? The answer is given in the following table:
\begin{center}
\newcommand{\head}[1]{\textnormal{\text{#1}}}
\begin{table}
\centering
\begin{tabular}{c|c|c|c|c|c|c}
\head{} & \head{AAW} & \head{CWC} & \head{Pareto} & \head{Mono}& \head{IIA} & \head{Non S.D}\\
\hline
L.P.R & Yes & No & Yes & No & No & Yes
\end{tabular}
\end{table}
\end{center}
\begin{proposition}
The L.P.R procedure satisfies AAW, Pareto, and nonexistence of social disappointment (Non S.D) criterion but does not satisfy CWC, Monotonicity, and IIA.
\end{proposition}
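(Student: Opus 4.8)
The plan is to handle the six assertions in two groups: the three positive properties (AAW, Pareto, Non S.D) follow directly from the shape of the elimination process, while the three negative ones (CWC, Monotonicity, IIA) are each refuted by a small three-alternative profile — and, conveniently, one profile will kill both CWC and Monotonicity.

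\emph{Positive properties.} For AAW I would note that at every round the alternatives appearing last on the most surviving lists form a nonempty set, since some alternative is last on each list as long as any alternative remains; hence at least one alternative is deleted per round, the process halts, and the alternatives deleted in the last round form a nonempty winner set. For Pareto, assume every voter ranks $x$ above $y$. Deletions never alter the relative order of surviving alternatives, so while $y$ survives $x$ lies above $y$ in every list, $x$ is last on no list, and therefore $x$ cannot be in that round's argmax; thus $x$ is deleted strictly later than $y$, so $y$ is never among the last-deleted alternatives and is not a social choice. For Non S.D, suppose $x$ is last on at least half of the $n$ original lists. Since the bottom-counts of the $m\ge 3$ alternatives sum to $n$, every alternative other than $x$ is last on at most $n/2$ lists, hence on no more lists than $x$, so $x$ belongs to the round-$1$ argmax and is deleted in round $1$. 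This cannot be the final round, for otherwise all $m\ge 3$ alternatives would be deleted simultaneously, each last on at least $n/2$ lists, forcing $n\ge 3n/2$, which is absurd. So $x$ is deleted before the final round and is not a winner, and social disappointment never occurs.

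\emph{Negative properties.} Let $P$ be the seven-voter profile with one voter $c>b>a$, three voters $a>b>c$, and three voters $c>a>b$. Under L.P.R the first-round bottom-counts are $1$ for $a$, $3$ for $b$, $3$ for $c$, so $b$ and $c$ are deleted and $a$ wins; yet in $P$ one checks that $c$ beats $a$ by $4$ to $3$ and $c$ beats $b$ by $4$ to $3$, so $c$ is the unique Condorcet winner, giving the failure of CWC. Now move $a$ up one place in the first voter's list, producing $c>a>b$; the new bottom-counts are $0$ for $a$, $4$ for $b$, $3$ for $c$, so only $b$ is deleted first, and in the remaining $\{a,c\}$ round $a$ is last on $4$ lists against $3$ for $c$, so $a$ is deleted and $c$ wins — $a$ has fallen from winner to loser although it was moved up, which is the failure of Monotonicity. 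For IIA, take the three-voter profile with two voters $x>y>z$ and one voter $y>x>z$: $z$ is deleted first, $x$ then beats $y$, and the social choice is $\{x\}$, excluding $y$. Relocating the irrelevant alternative $z$ to obtain $x>z>y$, $x>y>z$, $y>z>x$ alters no voter's $x$-versus-$y$ comparison, but now $x$, $y$, $z$ are each last on exactly one list, so all three are deleted together and the social choice becomes $\{x,y,z\}$, which includes $y$; hence IIA fails.

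I expect the monotonicity counterexample to be the main obstacle: moving an alternative up can only lower its own bottom-count, so the failure must be engineered indirectly, by changing which \emph{other} alternative is eliminated first and then following the cascade into a later round — and one must verify both that $a$ genuinely wins in $P$ and that, after the cascade in the perturbed profile, the $c$-versus-$a$ comparison that decides the final round favours $c$. The only remaining work is the routine, round-by-round check that each displayed profile really produces the stated L.P.R outcome.
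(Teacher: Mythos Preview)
Your proof is correct, but it differs from the paper's in several respects worth noting. For the positive properties your arguments are more careful than the paper's: the paper simply asserts that an alternative sitting at the bottom of at least half the lists is deleted in the first stage and therefore cannot win, without checking that the first stage is not also the last; your counting argument ($m\cdot(n/2)\le n$ forces $m\le 2$) closes that gap. Likewise your Pareto argument via invariance of relative order under deletion is tighter than the paper's informal ``$b$ is socially more resentful than $a$'' reasoning.

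For the negative properties the paper uses three separate profiles: a $2{+}2{+}2{+}1$ seven-voter profile for CWC, a $2{+}2{+}1$ five-voter profile for Monotonicity, and a $1{+}1{+}2$ four-voter profile for IIA. Your idea of choosing a single profile that simultaneously witnesses CWC failure and, after a one-step upward swap of the winner, Monotonicity failure is more economical and makes the underlying mechanism (the swap saves $a$ from the bottom but breaks the $b$--$c$ tie, so the cascade reaches a later round where $a$ loses head-to-head) more transparent. Your three-voter IIA example is also smaller than the paper's. Both approaches are valid; yours trades slightly more ingenuity in the construction for fewer and smaller examples.
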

\begin{proof}
For this procedure, the description makes it clear that there is at least one winner for every profile. So L.P.R satisfies AAW condition.

Suppose that there is a winner that is in the end of at least half of preference
profiles. The L.P.R procedure would delete this alternative from profile lists in the
first stage, so the social disappointment for this alternative could not occure.

Consider the three alternatives 'a', 'b', and 'c' and the following sequence of seven preference lists grouped into voting blocks of size two, two, two, and one:
\begin{center}
\newcommand{\head}[1]{\textnormal{\text{#1}}}
\begin{table}[!hbp]
\centering
\begin{tabular}{cccc}
Voters & Voters & Voters &Voter \\
1and2 & 3and4 & 5and6 & 7 \\
\hline
a&a &b &c \\
b&c &c & b\\
c&b &a &a
\end{tabular}
\end{table}
\end{center}
The L.P.R procedure produces 'b' as the social choice. However, 'a' is clearly the Condorcet winner, defeating each of the other alternatives in one-on-one competitions. Since the Condorcet winner is not the social choice in this situation, we have that the L.P.R procedure does not satisfy the Condorcet winner criterion.

L.P.R procedure satisfies the Pareto condition. Because if in all lists 'b' has occurred down below 'a', therefore at some point 'b' would be gone but 'a' stands still due to the fact that 'b' socially is more resentful than 'a'. So based on social choice procedure in this system 'b' would be eliminated in early stages or at most in comparison with 'a'.

Consider the three alternatives 'a', 'b', and 'c' and the following sequence of seven preference lists grouped into voting blocks of size two, two, two, and one:
\begin{center}
\newcommand{\head}[1]{\textnormal{\text{#1}}}
\begin{table}[!hbp]
\centering
\begin{tabular}{ccc}
Voters & Voters & Voter \\
1and2 & 3and4 & 5\\
\hline
b & c & b \\
a & a & c \\
c & b & a
\end{tabular}
\end{table}
\end{center}
We delete the alternatives which have taken place more than the other alternatives at the end of the votes. In this case, that would be alternatives 'c' and 'b' with the two last places in votes for each as compared to one for 'a'. But now 'a' is the only alternative left, and so it is the social choice when the L.P.R procedure is used.

Now suppose that the single voter on the most right changes his or her list by interchanging 'a' with the alternative that is right above 'a' on this list. This apparently favorable-to-'a'-change yields the following sequence of preference lists:
\begin{center}
\newcommand{\head}[1]{\textnormal{\text{#1}}}
\begin{table}[!hbp]
\centering
\begin{tabular}{ccc}
Voters & Voters & Voter \\
1and2 & 3and4 & 5\\
\hline
b & c & b \\
a & a & a \\
c & b & c
\end{tabular}
\end{table}
\end{center}
If we apply the L.P.R procedure again, we delete the alternatives which have taken place more than the other alternatives at the end of the votes. In this case, 'c' is that alternative. But the reader can now easily check that with 'c' so eliminated, alternative 'b' is at bottom of two of the fifth lists. Alternative 'a' is deleted and so 'b' is the social choice. This change in social choice from 'a' to 'c' shows that the L.P.R procedure does not satisfy monotonicity.

Consider the three alternatives 'a', 'b', and 'c' and the following sequence of four preference lists grouped into voting blocks of size one, one, and two:
\begin{center}
\newcommand{\head}[1]{\textnormal{\text{#1}}}
\begin{table}[!hbp]
\centering
\begin{tabular}{ccc}
Voter & Voter & Voters \\
1 & 2 & 3and4 \\
\hline
b & a & b \\
a & c & c \\
c & b & a
\end{tabular}
\end{table}
\end{center}
Alternative 'b' is the social choice when the L.P.R procedure is used. In particular, 'b' is a winner and 'a' is a non-winner. Now suppose that Voter 4 changes his or her list by interchanging the alternatives 'a' and 'c'. The lists then become:
\begin{center}
\newcommand{\head}[1]{\textnormal{\text{#1}}}
\begin{table}[!hbp]
\centering
\begin{tabular}{cccc}
Voter1 & Voter2 & Voter3 & Voter4 \\
\hline
b & a & b & b \\
a & c & c & a \\
c & b & a & c
\end{tabular}
\end{table}
\end{center}
Notice that we still have 'b' over 'a' in Voter 4’s list. However, L.P.R procedure now has 'a' and 'b' tied for the win. Thus, although no one changed his or her mind about whether 'a' is preferred to 'b' or 'b' to 'a', the alternative 'a' went from being a non-winner to being a winner. This shows that independence of irrelevant alternatives fails for the L.P.R procedure.
\end{proof}
\section{A Glimpse of Impossibility}\label{sec2}
Taylor proved in \cite{Taylor3} also \cite{Taylor} pp. 28-31 that there is no social choice procedure for three or more alternatives that satisfies the always-a-winner criterion, independence of irrelevant alternatives, and the Condorcet winner criterion. Now we prove an impossibility theorem based on social disappointment concept.
\begin{thm}
There is no social choice procedure for four or more alternatives that satisfies the nonexistence of S.D. criterion, and the Condorcet winner criterion.
\end{thm}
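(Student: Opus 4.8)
The plan is to argue by contradiction. Suppose $V$ is a social choice procedure for $m\geq 4$ alternatives satisfying both the Condorcet winner criterion and nonexistence of social disappointment. I would exhibit a single preference profile on which these two demands are irreconcilable: one in which a fixed alternative $a$ is the unique winner of Condorcet's method --- so that CWC forces $V$ to return exactly $\{a\}$ --- while at the same time $a$ sits at the bottom of at least half of the ballots, so that returning $\{a\}$ is by definition a socially disappointing outcome. The whole theorem reduces to building such a profile, and the pairwise bookkeeping around it is routine.

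Before constructing the profile I would isolate the constraint that forces $m\geq 4$. If $a$ is ranked last on at least half the ballots, then for every other alternative $x$ at least half the voters rank $x$ above $a$, so $a$ can never strictly beat $x$ in a head-to-head contest; hence the only way $a$ can be the \emph{sole} winner of Condorcet's method is for $a$ to tie every other alternative --- equivalently, $a$ ranked last on exactly half the ballots and first on the other half --- while every remaining alternative is defeated by some third one. With only three alternatives this is impossible: of the two alternatives other than $a$, whichever one wins or ties their head-to-head contest remains undefeated, so the Condorcet winner cannot be unique (this is precisely Remark 2.7). With four alternatives, however, the three remaining candidates $b,c,d$ have exactly enough room to sit in a Condorcet cycle, and that cycle is the engine of the construction.

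Concretely I would take six voters in three blocks of two, placing $a$ first on one ballot of each block and last on the other, while the restriction of each block's ballots to $\{b,c,d\}$ runs through the patterns $b{>}c{>}d$, $c{>}d{>}b$, $d{>}b{>}c$, one pattern per block. A direct tally then shows that $a$ ties each of $b,c,d$ by a $3$--$3$ vote, whereas $b$ beats $c$, $c$ beats $d$, and $d$ beats $b$ by $4$--$2$; thus $a$ is the only pairwise-undefeated alternative and hence the unique winner of Condorcet's method, while $a$ is ranked last on exactly three of the six ballots, i.e. on at least half of the profiles. By CWC, $V(\text{profile})=\{a\}$, which is social disappointment --- contradicting the hypothesis on $V$. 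To cover $m>4$ I would pad every ballot with the extra alternatives in a fixed order $e_1{>}\cdots{>}e_{m-4}$ inserted just below $b,c,d$ (hence above $a$ on the $a$-last ballots); each $e_i$ is then beaten by $b$, and the pairwise verdicts among $a,b,c,d$ are untouched, so the argument carries over verbatim.

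The step I expect to be the real work is locating the profile: one needs a configuration in which $a$ is simultaneously last on half the ballots \emph{and} the unique winner of Condorcet's method, and the non-obvious ingredient is forcing $b,c,d$ into a cycle so that none of them qualifies as a Condorcet winner. Once the symmetric first/last placement of $a$ against the cyclic arrangement of $b,c,d$ is in hand, everything else --- the $3$--$3$ and $4$--$2$ pairwise counts and the padding for $m>4$ --- is mechanical.
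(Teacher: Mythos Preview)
Your proposal is correct and takes essentially the same approach as the paper: both exhibit a six-voter, four-alternative profile in which one alternative is ranked first on exactly half the ballots and last on the other half while the remaining three alternatives sit in a Condorcet cycle, so that this alternative is the unique undefeated (hence unique Condorcet) winner yet its selection is socially disappointing. Your construction is isomorphic to the paper's up to relabeling, and you go a little further by explicitly padding for $m>4$ and by articulating why the argument cannot work at $m=3$, which the paper addresses only via its earlier remark.
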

\begin{proof}
We assume that we have a social choice procedure that satisfies the Condorcet winner criterion. We then show that if this procedure is applied to the profile that consists of Condorcet’s voting paradox \cite{C}, then it produces a winner which will lead to social disappointment. We prove this claim for when we have four alternatives.

Assume that we have a social choice procedure that satisfies the Condorcet winner criterion. Consider the following profile:
\begin{center}
\newcommand{\head}[1]{\textnormal{\text{#1}}}
\begin{table}[!hbp]
\centering
\begin{tabular}{cccccc}
\hline
d & d & d & c & b & b\\
a & a & c & a & c & c \\
b & b & a & b & a & a \\
c & c & b & d & d & d
\end{tabular}
\end{table}
\end{center}
Alternative 'd' is the unique social choice when the Condorcet’s method is used. Although 'd' is the social choice, but it is at the bottom of individual preference lists and so social disappointment has taken place.
\end{proof}
\begin{corollary}
There is no social choice procedure for four or more alternatives that satisfies:
\begin{itemize}
\item a) the nonexistence of S.D. criterion, always a winner, and the Condorcet winner criterion.
\end{itemize}
\begin{itemize}
\item b) the nonexistence of S.D. criterion, monotonicity, and the Condorcet winner criterion.
\end{itemize}
\begin{itemize}
\item c) the nonexistence of S.D. criterion, Independence of Irrelevant Alternatives, and the Condorcet winner criterion.
\end{itemize}
\begin{itemize}
\item d) the nonexistence of S.D. criterion, Pareto, and the Condorcet winner criterion.
\end{itemize}
\end{corollary}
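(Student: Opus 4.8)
The plan is to read off all four parts as immediate consequences of Theorem 3.1. Observe that each of the collections of criteria in (a)--(d) contains, as a sub-collection, precisely the two requirements appearing in Theorem 3.1, namely the nonexistence of social disappointment and the Condorcet winner criterion. Hence any social choice procedure for four or more alternatives that satisfies \emph{all} the conditions in one of (a)--(d) would in particular satisfy Non S.D.\ together with CWC, which Theorem 3.1 forbids. So no such procedure exists, for each of the four combinations, and the corollary is proved.

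Carried out explicitly, the argument runs the same way four times. Fix $m \ge 4$ and suppose, for contradiction, that a procedure $F$ on $m$ alternatives meets every condition listed in (a) (respectively (b), (c), (d)); discard the extra condition (AAW, monotonicity, IIA, or Pareto, as the case may be), so that $F$ still satisfies Non S.D.\ and CWC; then apply Theorem 3.1 — invoking the Condorcet paradox profile used in its proof, extended to $m$ alternatives as needed — to produce a winner sitting at the bottom of at least half the ballots, contradicting Non S.D. Since none of AAW, monotonicity, IIA, or Pareto is used anywhere in this chain, the identical contradiction is reached in all four cases.

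There is no genuine obstacle here: the corollary is a trivial weakening of Theorem 3.1, in the sense that imposing additional requirements on an already unsatisfiable pair of axioms cannot restore satisfiability. The only point worth a remark is that each item is nonetheless informative, because the extra criterion in each case is independently reasonable — there are, for instance, familiar procedures satisfying CWC together with AAW, or CWC together with Pareto, and so on — so (a)--(d) record that four specific and natural triples of desiderata are each unattainable, which is the reason to state them separately rather than merely reiterate Theorem 3.1.
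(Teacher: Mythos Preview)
Your argument is correct and matches the paper's own proof, which simply states that the corollary is obvious from Theorem~3.1. You have merely spelled out in detail what the paper leaves implicit: since each of (a)--(d) includes both Non~S.D.\ and CWC, Theorem~3.1 already rules out any procedure satisfying the larger list.
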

\begin{proof}
It is obvious considering Theorem 3.1.
\end{proof}
\subsection{Condorcet with an amendment and Seq. Pairs with an amendment procedure}
Before considering the rest of the possible cases, we introduce following two procedures and investigate their properties which we mentioned in this article.

\noindent
\textbf{Condorcet with an amendment procedure}

\noindent
This protocol is done the same as Condorcet’s method, with the difference that in the end we remove those alternatives with the social disappointment from the set of social choice.

\noindent
\textbf{Seq. Pairs with an amendment procedure}

\noindent
This protocol is done the same as Seq. Pairs, with the difference that in the end we remove those alternatives with the social disappointment from the set of social choice.

Which properties do the procedures satisfy? The answer is given in the following table:
\begin{center}
\newcommand{\head}[1]{\textnormal{\text{#1}}}
\begin{table}[!hbp]
\centering
\begin{tabular}{|c|c|c|c|c|c|c|}
\hline
\head{} & \head{AAW} & \head{CWC} & \head{Pareto} & \head{Mono}& \head{IIA} & \head{Non S.D}\\
\hline
Condorcet with an amendment & No & No & Yes & Yes & Yes & Yes\\
\hline
Seq. Pairs with an amendment & Yes & No & Yes & Yes & No & Yes\\
\hline
\end{tabular}
\end{table}
\end{center}
Here we will only prove three items of the table above. We leave the rest for the reader (The proof will be easy to comprehend according to the given definitions and \cite{Taylor} sections 1.5,1.6 ).
\begin{proposition}
Condorcet with an amendment and Seq. Pairs with an amendment procedures do not satisfy the Condorcet winner criterion. Furthermore Condorcet with an amendment procedure does not satisfy the always a winner criterion.
\end{proposition}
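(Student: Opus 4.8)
The plan is to prove every clause by producing an explicit counterexample profile, and the economical choice is to reuse the four–alternative profile constructed in the proof of Theorem 3.1. On that profile Condorcet's method singles out the alternative $d$, yet $d$ occupies the last position in at least half of the ballots and hence suffers social disappointment; this is exactly the combination --- a Condorcet-type winner that is simultaneously a public loser --- that both amended procedures are unable to handle.

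First I would treat \textbf{Condorcet with an amendment}. Running the underlying Condorcet step on the Theorem 3.1 profile yields the social-choice set $\{d\}$; the amendment then deletes $d$ because it has social disappointment, so the procedure returns the empty set. Hence there is a profile with no winner at all, which refutes the always-a-winner criterion. The same profile also refutes the Condorcet winner criterion, since $d$ is the Condorcet winner there but is not produced by the amended procedure (indeed nothing is), so it is false that the Condorcet winner is the sole social choice. Thus a single profile settles both assertions about this procedure.

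Next, for \textbf{Seq.\ Pairs with an amendment} I would take the same profile together with an arbitrary fixed agenda. Because $d$ is a Condorcet winner, sequential pairwise voting elects $d$ no matter how the alternatives are ordered on the agenda; the amendment again strikes $d$ on account of its social disappointment, so the Condorcet winner fails to be the social choice and CWC is violated. A three-alternative example of the shape used in Claim 2.4 would also work here, but recycling the Theorem 3.1 profile keeps the argument uniform.

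The step I would be most careful about is not the arithmetic of the pairwise tallies or of ``last on at least half the lists'' --- both are already verified inside Theorem 3.1 --- but making sure that the \emph{base} method outputs precisely the disappointing alternative and nothing that survives the amendment. For the always-a-winner clause this is what forces the amended Condorcet output to be empty rather than merely smaller, and it is why the counterexample must have the particular ``voting-paradox-with-a-universally-bottom alternative'' structure of Theorem 3.1 rather than being an arbitrary profile: any slack here --- some other alternative surviving, or $d$ failing to occupy the bottom of a full half of the lists --- would break the argument.
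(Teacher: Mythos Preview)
Your argument is correct and follows the paper's own proof almost verbatim: both use the six-voter, four-alternative profile from Theorem~3.1, observe that $d$ is the unique Condorcet winner while sitting last on half the ballots, and conclude that the amendment deletes $d$ from each procedure's output. The only cosmetic difference is that the paper fixes the agenda $a,b,c,d$ for sequential pairwise voting and computes the base output $\{c,d\}$ explicitly (so that $\{c\}$ survives the amendment), whereas you argue agenda-independently that $d$ must appear in the Seq.\ Pairs output and then gets struck; either way $d$ is not the sole social choice, so CWC fails.

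One small correction to a parenthetical remark: the three-alternative profile from Claim~2.4 would \emph{not} work for the Seq.\ Pairs clause, because in that profile both $a$ and $b$ are winners under Condorcet's method (each ties the other and neither loses outright), so there is no \emph{unique} Condorcet winner and CWC is vacuous there. This does not affect your main argument, which relies only on the Theorem~3.1 profile.
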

\begin{proof}
Consider the following profile:
\begin{center}
\newcommand{\head}[1]{\textnormal{\text{#1}}}
\begin{table}[!hbp]
\centering
\begin{tabular}{cccccc}
\hline
d & d & d & c & b & b\\
a & a & c & a & c & c \\
b & b & a & b & a & a \\
c & c & b & d & d & d
\end{tabular}
\end{table}
\end{center}

Alternative 'd' is the unique social choice when the Condorcet’s method is used. Therefore there is no winner (NW) when the Condorcet with an amendment procedure is used. Consider Seq. Pairs voting with fixed agenda a,b,c,d. Alternatives 'c', 'd' are the social choices when the Seq. Pairs voting with this particular fixed agenda is used. Thus alternative 'c' is the social choice when Seq. Pairs with an amendment is used, so Condorcet with an amendment and Seq. Pairs with an amendment procedures do not satisfy the Condorcet winner criterion. Furthermore according to what was said Condorcet with an amendment procedure does not satisfy the always a winner criterion.

\end{proof}
\begin{remark}
In light of Remark 2.1 we understand that set of social choice would be either all three alternatives or two alternatives which one would be free from social disappointment. Anyway, set of social choice would include the alternative with no social disappointment. Considering this fact the following proposition will be prove.
\end{remark}
\begin{proposition}
There exist some social choice procedures for three alternatives that satisfy:
\begin{itemize}
\item a) the nonexistence of S.D. criterion, and the Condorcet winner criterion.
\end{itemize}
\begin{itemize}
\item b) the nonexistence of S.D. criterion, the Condorcet winner, and always a winner criterion.
\end{itemize}
\begin{itemize}
\item c) the nonexistence of S.D. criterion, the Condorcet winner, and Pareto criterion.
\end{itemize}
\begin{itemize}
\item d) the nonexistence of S.D. criterion, the Condorcet winner, and monotonicity criterion.
\end{itemize}
\begin{itemize}
\item e) the nonexistence of S.D. criterion, the Condorcet winner, and independence of irrelevant alternatives criterion.
\end{itemize}
\begin{itemize}
\item f) the Pareto criterion, the Condorcet winner, and independence of irrelevant alternatives criterion.
\end{itemize}
\end{proposition}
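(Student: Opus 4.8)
The plan is to produce, for each item, an explicit procedure and to exploit the fact that the obstruction behind Theorem 3.1 disappears once there are only three alternatives. The observation that powers everything, and which I would establish first, is: \emph{with three alternatives a Condorcet winner can never suffer social disappointment.} For if an alternative $x$ sits at the bottom of at least half the ballots, then for each of the other two alternatives $y$ at least half of the voters rank $y$ above $x$, so $x$ cannot strictly defeat $y$ head-to-head; a Condorcet winner is defeated by nobody, so $x$ would have to be tied with both rivals, and then (this is exactly Remark 2.1) Condorcet's method does not have $x$ as its unique winner --- a contradiction. The same one-line count ($n$ ballots, one bottom alternative apiece) shows that at most two of the three alternatives can suffer social disappointment on a given profile, and together with Remark 2.1 it gives Remark 3.1: the social choice set of Condorcet's method, when nonempty, always contains an alternative free of social disappointment.

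With these in hand I would take the procedure to be the Condorcet-with-an-amendment procedure of Section 3.1. Whenever ``the Condorcet winner'' $w$ exists, Condorcet's method outputs exactly $\{w\}$, and by the observation above the amendment step does not delete $w$; hence the procedure outputs $\{w\}$, so on three alternatives it satisfies the Condorcet winner criterion --- the ``No'' in the Section 3.1 table merely records the four-alternative behavior that Theorem 3.1 forbids. Non-existence of social disappointment holds by construction, and Pareto, monotonicity, and independence of irrelevant alternatives are the properties already listed for this procedure in that table: the amendment step only ever removes alternatives, a Pareto-dominated alternative is already excluded as a dominated one, and pushing $x$ up a ballot can only strengthen $x$ in head-to-head comparisons and can only lower its bottom-count, which is all monotonicity asks. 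This one procedure is thus a witness for (a), (c), (d), (e), and (f) --- the last because ``Pareto, Condorcet winner, and independence of irrelevant alternatives'' is weaker than what the procedure already delivers (plain Condorcet's method witnesses (f) directly as well).

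Only (b) demands something this procedure lacks, namely always-a-winner (Proposition 3.1). For (b) I would patch it: run Condorcet with an amendment, and on those profiles where it returns no winner, instead return the L.P.R winner of the original profile. For three alternatives the amendment can leave an empty set only when Condorcet's method is itself empty (by Remark 3.1), i.e.\ on a Condorcet cycle, where in particular there is no Condorcet winner at all; so the patch never disturbs the Condorcet winner criterion, and on the profiles where it is actually invoked it inherits always-a-winner and non-existence of social disappointment from the L.P.R procedure (Proposition 2.1). Hence this hybrid witnesses (b).

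The step I expect to be the real work is the first observation --- pinning down that the amendment step cannot remove the unique Condorcet winner when exactly three alternatives are present; everything after that is bookkeeping. The two places where that bookkeeping still needs a genuine (if routine) case analysis are: checking that the amendment step does not break monotonicity or independence of irrelevant alternatives in parts (d) and (e); and --- the reason I keep the L.P.R fallback confined to the Condorcet-cycle case in (b) --- checking that gluing two procedures together does not accidentally manufacture a monotonicity failure, which it cannot here because the fallback is triggered only where the first procedure returns nothing at all.
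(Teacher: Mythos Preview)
Your approach is correct and, for items (a), (c), (d), (e), and (f), coincides exactly with the paper's: the witness is Condorcet's method with an amendment, and the key observation you spell out --- that with three alternatives a Condorcet winner cannot suffer social disappointment --- is precisely what underlies Remark~3.1. The paper's proof is a single sentence citing that remark; you actually supply the argument it leaves implicit.

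The one genuine difference is item (b). You build an ad hoc hybrid: Condorcet-with-an-amendment, falling back to L.P.R.\ on profiles where the former returns nothing. This is correct, and your justification is sound (by Remark~3.1 the fallback fires only when Condorcet's method itself is empty, hence only when there is no Condorcet winner, so CWC is undisturbed; AAW and Non~S.D.\ come from Proposition~2.1). The paper instead simply reuses Seq.\ Pairs with an amendment, already introduced and tabulated in Section~3.1 as satisfying AAW and Non~S.D.; its CWC on three alternatives follows from the same observation you prove, since a Condorcet winner survives every match in the agenda and then cannot be removed by the amendment step. The paper's choice is tidier --- no new procedure is defined --- while yours is more self-contained about \emph{why} the fallback cannot interfere with CWC. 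One small remark: your final sentence worries about monotonicity failures under gluing, but item (b) does not ask for monotonicity, so that concern is moot.
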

\begin{itemize}
\item g) the nonexistence of S.D. criterion, monotonicity, and independence of irrelevant alternatives criterion.
\end{itemize}
\begin{proof}
For (b) consider Seq. Pairs with an amendment and for the rest Condorcet's method with an amendment.
\end{proof}
\newpage
\subsection{The Least Unpopular (L.U) and The Least Unpopular Reselection procedure(L.U.R)}
To investigate the remaining cases we introduce and investigate The Least Unpopular and The Least Unpopular Reselection procedures.

\noindent
\textbf{The Least Unpopular procedure (L.U)}

\noindent
The social choice in this procedure is (are) the alternative(s) that appears lesser than the others at the bottom of individual preference lists. This protocol satisfies AAW, monotonicity, and nonexistence of S.D. criterion, but does not satisfy CWC, Pareto, and IIA criterion.
\begin{proposition}
The Least Unpopular procedure does not satisfy CWC, IIA, and Pareto criterion.
\end{proposition}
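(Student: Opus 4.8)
The plan is to refute each of the three properties separately with its own small witnessing profile, exactly in the spirit of the earlier propositions of the paper; recall that under L.U the winner(s) are the alternative(s) ranked last on the fewest ballots. \textbf{CWC.} I would recycle the seven-voter profile already used in the proof of Proposition~2.1 to show that L.P.R fails the Condorcet winner criterion: two voters ranking $a,b,c$, two ranking $a,c,b$, two ranking $b,c,a$, and one ranking $c,b,a$, for which $a$ is the unique Condorcet winner, beating each of $b$ and $c$ by $4$ to $3$. The observation to make is that $a$ finishes last on three of the seven ballots while $b$ and $c$ finish last on only two apiece, so L.U returns $\{b,c\}$ and discards the Condorcet winner altogether. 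The one point that needs a word of justification is that a Condorcet winner can legitimately sit at the bottom of as many as three ballots: this is consistent because three is still a strict minority of seven, so those voters do not reverse $a$'s two pairwise majorities.

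\textbf{Pareto.} The quickest witness is the unanimous profile in which every voter submits the list $a,b,c$ (any number $\ge 3$ of alternatives works). Then $b$ is never last, hence ties $a$ for the fewest last-place finishes, so $b$ belongs to the L.U outcome even though every voter prefers $a$ to $b$ — a direct violation of Pareto. If a non-degenerate example is wanted, a four-alternative profile with blocks ranking $a,b,c,d$, $a,b,d,c$, $c,a,b,d$, and $d,a,b,c$ does the same job: $a$ and $b$ tie for zero last-place finishes while $c$ and $d$ finish last twice each, so $b$ wins although $a$ is unanimously preferred to $b$.

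\textbf{IIA.} I would reuse the four-voter profile from the IIA part of the proof of Proposition~2.1: voter $1$ ranks $b,a,c$, voter $2$ ranks $a,c,b$, and voters $3$ and $4$ rank $b,c,a$. Tallying last places gives $a$ twice and $b,c$ once each, so L.U returns $\{b,c\}$: $b$ is a winner and $a$ is not. Now have voter $4$ interchange $a$ and $c$, producing the list $b,a,c$; this changes no voter's ranking of $a$ against $b$. Re-tallying gives $c$ twice and $a,b$ once each, so L.U now returns $\{a,b\}$, promoting $a$ from non-winner to winner. Since no one's opinion about $a$ versus $b$ moved, independence of irrelevant alternatives fails.

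\textbf{Where the work is.} None of the three steps is conceptually hard; the only place that calls for care is the CWC case, and there the care is purely arithmetic — one must verify the three last-place counts and the two pairwise majorities on the same profile simultaneously, and make sure the Condorcet winner is not accidentally also a last-place minimizer. Recycling the profile already validated in the L.P.R argument removes most of that risk, and the Pareto and IIA cases become immediate once the right tiny profiles are written down.
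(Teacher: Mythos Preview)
Your proposal is correct; each of the three counterexamples works exactly as you claim, and the overall strategy---exhibit a small profile witnessing the failure of each property---is the same as the paper's.

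The only difference worth noting is in the choice of profiles. You recycle the larger profiles from Proposition~2.1 (seven voters for CWC, four voters for IIA), which has the virtue that their pairwise tallies have already been verified elsewhere in the paper. The paper instead builds a single three-voter, three-alternative profile (two voters ranking $a,b,c$ and one ranking $b,c,a$) and makes it do double duty: $a$ is the Condorcet winner but $b$ is the unique L.U winner (refuting CWC), and then having voter~3 swap $a$ and $c$ promotes $a$ into a tie with $b$ without touching anyone's $a$-versus-$b$ ranking (refuting IIA). For Pareto the paper uses essentially your second four-alternative example. So your recycling saves re-checking pairwise counts, while the paper's construction is more economical and highlights that CWC and IIA can be broken by the same tiny profile.
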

\begin{proof}
Consider the four alternatives 'a', 'b', 'c', and 'd' and the following profile:
\begin{center}
\newcommand{\head}[1]{\textnormal{\text{#1}}}
\begin{table}[!hbp]
\centering
\begin{tabular}{ccc}
Voters & Voter & Voter \\
1and2 & 3 & 4 \\
\hline
a & c & d \\
b & a & a \\
c & b & b \\
d & d & c
\end{tabular}
\end{table}
\end{center}
Alternatives 'a', 'b' are the social choices when the Least Unpopular procedure is used. Thus, alternative 'b' is in the set of social choice even though everyone prefers 'a' to 'b'. This show that Pareto fails.
Now consider the three alternatives 'a', 'b', 'c' and following profile:
\begin{center}
\newcommand{\head}[1]{\textnormal{\text{#1}}}
\begin{table}[!hbp]
\centering
\begin{tabular}{cc}
Voters & Voter \\
1and2 &3 \\
\hline
a & b \\
b & c \\
c & a
\end{tabular}
\end{table}
\end{center}
Alternative 'b' is the social choice when the Least Unpopular procedure is used. However, 'a' is clearly the Condorcet winner, defeating each of the other alternatives in one-on-one competitions. Since the Condorcet winner is not the social choice in this situation, we have that the Least Unpopular procedure does not satisfy the Condorcet winner criterion. On the other hand
'b' is a non-winner. Now suppose that voter 3 changes his or her list by interchanging the alternatives 'a' and 'c'.
The lists then become:
\begin{center}
\newcommand{\head}[1]{\textnormal{\text{#1}}}
\begin{table}
\centering
\begin{tabular}{cc}
Voters & Voter \\
1and2 & 3 \\
\hline
a & b \\
b & a \\
c & c
\end{tabular}
\end{table}
\end{center}
Notice that we still have 'b' over 'a' in Voter 4’s list. However, Least Unpopular procedure now has 'a' and 'b' tied for the win. Thus, although no one changed his or her mind about whether 'a' is preferred to 'b' or 'b' to 'a', the alternative 'a' went from being a non-winner to being a winner. This shows that independence of irrelevant alternatives fails for the Least Unpopular procedure.
\end{proof}
\begin{proposition}
There are some social choice procedures for three or more alternatives that satisfy:

a) the nonexistence of S.D. criterion, monotonicity, and always a winner criterion.

b) the nonexistence of S.D. criterion, Pareto, and always a winner criterion.
\end{proposition}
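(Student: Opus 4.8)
The plan is to exhibit, for each of the two clauses, a single procedure already introduced in this paper that witnesses the required combination of properties. Since the proposition only claims existence, two separate witnesses suffice, and there is no need for one procedure that satisfies all of Non S.D., monotonicity, Pareto, and AAW at once.

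Clause (b) requires essentially no new work. By Proposition 2.1 the Least Public Resentment procedure (L.P.R) satisfies AAW, Pareto, and the nonexistence of social disappointment simultaneously, so L.P.R is the desired witness and the argument for (b) reduces to a reference back to that proposition.

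For clause (a) I would use the Least Unpopular procedure (L.U) and check its three properties directly. First, AAW is immediate: among finitely many alternatives some alternative always attains the minimum bottom-count (the number of lists in which it appears last), so a winner always exists. Second, for nonexistence of S.D., let there be $m \ge 3$ alternatives and $n$ voters; each list contributes exactly one last-place appearance, so the bottom-counts of the $m$ alternatives sum to $n$, whence the minimum bottom-count is at most $n/m \le n/3 < n/2$. Therefore no L.U-winner can appear at the bottom of at least half the lists, and this is exactly where the hypothesis of three or more alternatives enters. Third, for monotonicity, suppose $x$ is a winner (possibly tied) and one voter moves $x$ up one position: if $x$ was not at the bottom of that list, no list's last-place alternative changes and nothing changes at all; if $x$ was at the bottom, then $x$'s bottom-count drops by one, the bottom-count of the alternative formerly just above $x$ rises by one, and every other count is unchanged, so $x$ still attains the minimum. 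In both cases $x$ remains a winner, so L.U is monotone.

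I expect the only place needing care is the monotonicity case analysis for L.U: one must notice that promoting $x$ can create a new last-place alternative in that one list, and then argue that ties among the minimizing alternatives are preserved under the single upward swap. This is routine bookkeeping rather than a conceptual obstacle. The one step with genuine content is the pigeonhole estimate in the Non S.D. part, and even that collapses to the one-line inequality $n/m \le n/3 < n/2$ once $m \ge 3$ is used; so the main task is simply writing the L.U verification out cleanly.
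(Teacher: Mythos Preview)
Your argument is correct. For part (a) you invoke L.U., exactly as the paper does, but you supply the verification (AAW trivially, Non S.D.\ via the pigeonhole bound $\min \le n/m \le n/3 < n/2$, and monotonicity by the two-case swap analysis) that the paper merely asserts in the sentence introducing L.U. That verification is sound; the monotonicity case where $x$ was last is the only place to be careful, and your observation that $x$'s bottom-count strictly decreases while no other count decreases settles it.

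Where you genuinely diverge from the paper is part (b). The paper's proof is the single sentence ``The Least Unpopular procedure is one of them,'' apparently offered for both clauses. But this conflicts with the paper's own Proposition 3.3, which exhibits a profile where L.U.\ violates Pareto (both $a$ and $b$ win although every voter ranks $a$ above $b$). Your choice of L.P.R.\ as the witness for (b), citing Proposition 2.1 for AAW, Pareto, and Non S.D., is therefore not just a different route but a repair of an inconsistency in the paper. What your approach buys is internal coherence; what the paper's one-line proof would have bought, had L.U.\ actually satisfied Pareto, is a single witness for both clauses---but it does not.
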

\begin{proof}
The Least Unpopular procedure is one of them.
\end{proof}
Now we introduce and investigate The Least Unpopular Reselection procedure.

\noindent
\textbf{The Least Unpopular Reselection (L.U.R)}

\noindent
First of all we choose a set of alternatives appearing lesser than the others at the bottom of individual preference lists. If this set has only one member, it would be the social choice. Otherwise we remove the remaining alternatives and run the L.U. procedure for the set obtained from the first stage, and keep doing this until there will be no continuing. The obtained set in the last repetition would be the social choice. This protocol satisfies AAW, monotonicity, Pareto, and nonexistence of S.D. criterion, but does not satisfy CWC, and IIA criterion.
\begin{proposition}
The Least Unpopular Reselection procedure does not satisfy CWC and IIA criterion.
\end{proposition}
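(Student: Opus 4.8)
The plan is to prove both failures by explicit counterexamples, reusing the profiles already constructed for the Least Unpopular procedure and then checking that the extra reselection rounds of L.U.R do not rescue the relevant alternative.

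For the Condorcet winner criterion, I would take the three alternatives $a,b,c$ with voters $1$ and $2$ ranking them $a,b,c$ from top to bottom and voter $3$ ranking them $b,c,a$. First I would count bottom appearances: $c$ is last twice, $a$ is last once, and $b$ is last zero times, so the first selection stage of L.U.R already returns the singleton $\{b\}$; since this set has only one member it is the social choice and the reselection loop never fires. Then I would verify by head-to-head comparison that $a$ beats $b$ and $a$ beats $c$, each by $2$ to $1$ (while $b$ beats $c$ by $3$ to $0$), so $a$ is the Condorcet winner. As $b\neq a$, CWC fails; the only point requiring care is to note explicitly that the first stage is a strict singleton, so L.U.R here agrees with L.U.

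For independence of irrelevant alternatives, I would start from the profile with voters $1$ and $2$ ranking $a,b,c$ and voter $3$ ranking $b,c,a$; by the computation above the social choice is $\{b\}$, so $b$ is a winner and $a$ is not. Now let voter $3$ change his list to $b,a,c$ — an interchange of $a$ with the alternative directly above it that still leaves $b$ ranked over $a$, and that changes nothing for voters $1$ and $2$. I would then recompute: now $c$ is last three times while $a$ and $b$ are last zero times, so the first stage returns the \emph{tie} $\{a,b\}$, which is not a singleton; the procedure therefore deletes $c$ and reruns L.U. on $\{a,b\}$, where $b$ is last twice (voters $1$ and $2$) and $a$ is last once (voter $3$), so the second stage returns the singleton $\{a\}$. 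Hence the new social choice is $\{a\}$: the alternative $a$ has moved from non-winner to winner although no voter changed his or her opinion about $a$ versus $b$, so IIA fails.

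The obstacle in both parts is purely the bookkeeping of the reselection stages and the pairwise tallies; conceptually nothing is hard, since L.U.R coincides with L.U. whenever the first stage is already decisive, and in the IIA example exactly one further round on the two-element restricted profile settles everything. One should only take care that each stage respects the definition — a tie at the first stage triggers exactly one more application of L.U. to the surviving alternatives — and that the claimed Condorcet winner indeed defeats every rival in one-on-one competition.
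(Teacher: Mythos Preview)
Your proposal is correct and is exactly the approach the paper intends: its proof simply says to ``perform as we did in the proof of Proposition~3.3,'' i.e.\ reuse the very profiles you chose and check that the reselection stage does not alter the conclusions. Your extra bookkeeping for the second L.U.R.\ round in the IIA example (yielding the singleton $\{a\}$ rather than the tie $\{a,b\}$ of plain L.U.) is the only new computation required, and you carry it out correctly.
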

\begin{proof}
Perform as we did in the proof of Proposition 3.3.
\end{proof}
\begin{proposition}
There are some social choice procedures for three or more alternatives that satisfy the nonexistence of S.D. criterion, Pareto, and monotonicity criterion.
\end{proposition}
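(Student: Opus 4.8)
The plan is to exhibit a concrete witness, and the natural choice is the Least Unpopular Reselection procedure (L.U.R) introduced above; the Condorcet-with-an-amendment and Seq. Pairs-with-an-amendment procedures of Section~3.2 would serve as well. First I would note that L.U.R is well defined and always produces a winner: each round leaves a nonempty set, and since a round either strictly shrinks the current set of alternatives or leaves it fixed, the iteration terminates. It then remains to verify nonexistence of social disappointment, Pareto, and monotonicity.

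For nonexistence of S.D., suppose some alternative $x$ lies at the bottom of at least half of the $n$ ballots, so its last-place count is at least $n/2$. If $x$ survived the first round, every alternative would have last-place count at least that of $x$, hence at least $n/2$; but with three or more alternatives we may choose two others besides $x$, and the three last-place counts would then sum to at least $3n/2$, whereas the last-place counts of all alternatives sum to exactly $n$. This forces $n\le 0$, a contradiction for any nonempty electorate. So $x$ is removed in the first round and is not a social choice.

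For Pareto, assume every voter ranks $a$ above $b$. In any surviving set $S$ with $a,b\in S$, no ballot has $a$ at the bottom of its restriction to $S$, because the surviving alternative $b$ always lies below $a$; thus $a$ has last-place count $0$ within $S$, so $a$ is among the minimizers and survives that round. Moreover a set of size at least two containing both $a$ and $b$ can never be a fixed point of the round operator, since the last-place counts over $S$ sum to $n\ge 1$ and so cannot all equal $a$'s count of $0$. Hence the surviving set strictly shrinks while $b$ is present, and $b$ must eventually be eliminated; $b$ is not a social choice.

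Monotonicity is the delicate step. Suppose $x$ is a (possibly tied) winner and one voter raises $x$ by one position, interchanging it with the alternative $w$ immediately above it. The effect on the data driving the procedure is purely local: within any set $S$ of alternatives the last-place count of $x$ can only decrease and that of $w$ can only increase, by the same amount of at most one, and no other count changes; moreover such a change occurs only when both $x$ and $w$ belong to $S$. One is tempted to conclude at once that $x$ survives every round, but the obstacle is that raising $x$ may change which alternatives get eliminated, so the two runs of L.U.R need not follow the same sequence of surviving sets. The argument I would give is an induction on the rounds of the new run, showing that $x$ is always retained: at each round $x$ has a last-place count no larger than before, and one accounts for the single extra last-place vote that $w$ may collect to check that it never makes $w$, or any other alternative, strictly beat $x$ before the procedure stabilizes. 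Making this bookkeeping airtight — especially when $w$ is itself an alternative that $x$ only out-survives at a later round — is the one place where care is needed; the remaining assertions follow directly from the definitions and from the analysis already carried out for L.U. and L.U.R.
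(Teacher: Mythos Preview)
Your approach matches the paper's exactly: both exhibit the L.U.R procedure as the witness, and in fact the paper's entire proof is the single sentence ``The L.U.R procedure is one of them,'' relying on the earlier bare assertion that L.U.R enjoys AAW, monotonicity, Pareto, and Non S.D. Your added arguments for Non S.D.\ and Pareto are correct and go beyond what the paper supplies, and your explicit caveat that the monotonicity step still needs careful bookkeeping is well placed, since the paper itself never writes that argument out.
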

\begin{proof}
The L.U.R procedure is one of them.
\end{proof}
\vspace*{.5cm}

\section{Conclusion and Future Direction}\label{sec3}

Here in the following table we summarize what we came to conclusion in the previous sections:
\begin{center}
\newcommand{\head}[1]{\textnormal{\text{#1}}}
\begin{table}
\centering
\begin{tabular}{|l|c|c|c|c|c|c|}
\hline
\head{} & \head{AAW} & \head{CWC} & \head{Pareto} & \head{Mono}& \head{IIA} & \head{Non S.D}\\
\hline
Condorcet & no & yes & yes & yes & yes & no \\
\hline
Plurality & yes & no & yes & yes & no & no \\
\hline
Borda count & yes & no & yes & yes & no & no \\
\hline
Hare system & yes & no & yes & no & no & no \\
\hline
Seq. Pairs & yes & no & no & yes & no & no \\
\hline
Dictator & yes & no & yes & yes & no & no \\
\hline
L.P.R & yes & no & yes & yes & no & yes \\
\hline
Condorcet with an amendment & no & no & yes & yes & yes & yes \\
\hline
Seq. Pairs with an amendment & yes & no & yes & yes & no & yes \\
\hline
L.U & yes & no & no & yes & no & yes \\
\hline
L.U.R & yes & no & yes & yes & no & yes \\
\hline
\end{tabular}
\end{table}
\end{center}
Finally we must note that the well-known procedures that satisfy the condition IIA are Condorcet extension which means that they choose the Condorcet winner whenever one exists. Since CWC is not compatible with the nonexistence of S.D. criterion, we don’t know whether there are some social choice procedures for three alternatives that satisfy the nonexistence of S.D. criterion, always a winner criterion, and independence of irrelevant alternatives criterion.

This question can be a motivation for future work.
\section*{History and Related Works}
Kenneth Arrow showed that it is impossible to design a voting rule that satisfies some very natural properties (Arrow, 1950)\cite{A2}. This seminal work is thus named Arrow’s impossibility theorem, and is broadly regarded as the beginning of modern Social Choice Theory, which is an active research direction in Economics\cite{X}.

In terms of the literature, there are, as one might expect, an abundance of treatments of Arrow’s theorem. Two of the most notable book-length treatments are Arrow (1963)\cite{A2} and Kelly (1978)\cite{Kelly}. Textbook coverage (with proofs) is also widely available, including chapters in Kelly (1987)\cite{Kelly2}, Saari (1995)\cite{Saari}, and Taylor(1995)\cite{Taylor}.

The 1970s seem to have been the heyday for social choice research in the second half of the twentieth century. For references, Kelly (1991)\cite{Kelly3} is remarkable. Books on social choice (from a number of different perspectives) include Sen (1970)\cite{Sen}, Fishburn (1973)\cite{Fish}, Feldman (1980)\cite{Feld}, Riker (1982)\cite{Riker}, Schofield (1985)\cite{Sch}, Campbell (1992)\cite{Cam}, Shepsle and Bonchek (1997)\cite{SB}, Austen-Smith and Banks (2000)\cite{AB}, and Arrow, Sen, and Suzumura, (2002)\cite{ASS}. Books on different aspects of voting include Straffin (1980)\cite{St}, Nurmi (1987)\cite{Nurmi}, Saari (1994)\cite{Saari2}, Felsenthal and Machover (1998)\cite{FM}, Taylor and Zwicker (1999)\cite{TZ}, and Saari (2001)\cite{Saari3}. An important recent survey is Brams and Fishburn (2002)\cite{BF}(see\cite{Taylor}, pp. 19,20).

Originating in economics and political science, social choice theory has since found its place as one of the fundamental tools for the study of multi-agent systems. The reasons for this development are clear: if we view a multi-agent system as a “society” of autonomous software agents, each of which has different objectives, is endowed with different capabilities, and possesses different information, then we require clearly defined and well-understood mechanisms for aggregating their views so as to be able to make collective decisions in such a multi-agent system(see\cite{BCE}, page:3). In fact, a burgeoning area—Computational Social Choice—aims to address problems in computational aspects of information/preference representation and aggregation in multi-agent scenarios(see\cite{X},\cite{CELM}and\cite{BCE}).


\begin{thebibliography}{10}

\bibitem{A1}
Arrow, K. A difficulty in the concept of social welfare. The Journal of Political Economy 58 (1950), 328-46.

\bibitem{A2}
Arrow, K. Social Choice and Individual Values. John Wiley and Sons, (1951). Second edition (1963).

\bibitem{AB}
Austen-Smith, D. and J. Banks, Positive political theory I: collective preferences. University of Michigan Press: Ann Arbor, (2000).

\bibitem{ASS}
Arrow, K. and A. Sen and K. Suzumura (Eds.), Handbook of social choice and welfare, vol. I, North-Holland, New York, (2002).

\bibitem{Borda}
Borda J-C de (1953) M\'emoire sur les \'elections au Scrutin. In: Histoire de l'Acad\'emie Royale des
Sciences, (1781). (Translated by Grazia A de as Mathematical derivation of an election system.
Isis 44:42-51)

\bibitem{BF}
Brams, S. and P. Fishburn, Voting procedures. In the Handbook of Social Choice and Welfare. Arrow, Sen, and Suzumura, eds. (2002), 175-236.

\bibitem{BCE}
Brandt, F., Conitzer, V., and Endriss, U. Computational Social Choice. MIT Press (2012).

\bibitem{Cam}
Campbell, D. Equity, efficiency, and social choice. Clarendon Press, Oxford, (1992).

\bibitem{CELM}
Chevaleyre, Y., Endriss, U., Lang, J., and Maudet, N. (2007), “A Short Introduction to Computational Social Choice,” in Proceedings of the 33rd Conference on Current Trends in Theory and Practice of Computer Science (SOFSEM-2007).

\bibitem{C}
Condorcet, Marquis De. Essai sur l'application de l'analyse \`a la probabilit\'e des d\'ecisions rendues \`a la pluralit\'e des voix, Paris, (1785).

\bibitem{Feld}
Feldman, A. Welfare economics and social choice theory. Kluwer: Nijhoff, (1980).

\bibitem{FM}
Felsenthal, D. and M. Machover, The measurement of voting power: theory and practice, problems and paradoxes. Edward Elgar: Cheltenham, UK, (1998).

\bibitem{Fish}
Fishburn, P. The theory of social choice. Princeton University Press: Princeton, NJ, (1973).

\bibitem{Kelly}
Kelly, J. Arrow impossibility theorems. Academic Press: New York, (1978).

\bibitem{Kelly2}
Kelly, J. Social choice theory: an introduction. Springer-Verlag: New York, (1987).

\bibitem{Kelly3}
Kelly, J. Social choice bibliography. Social Choice and Welfare 8 (1991), 97-169.

\bibitem{Mc}
McLean, I. and Urken, A. B. General introduction, in: I. McLean and A. Urken (eds), Classics of Social Choice, The University of Michigan Press(1995).

\bibitem{Nurmi}
Nurmi, H. Comparing voting systems. D. Reidel Publishing Company: Dordrecht, Holland, 1987.

\bibitem{Riker}
Riker,W. Liberalism against populism: a confrontation between the theory of democracy and the theory of social choice. W. H. Freeman: San Francisco, (1982).

\bibitem{Saari2}
Saari, D. The geometry of voting. Springer-Verlag: New York, (1994).

\bibitem{Saari}
Saari, D. Basic geometry of voting. Springer-Verlag: New York, (1995).

\bibitem{Saari3}
Saari, D. Choatic elections: a mathematician looks at voting. The American Mathematical Society, (2001).

\bibitem{Sch}
Schofield, N. Social choice and democracy. Springer Verlag: Berlin, (1985).

\bibitem{Sen}
Sen, A. Collective choice and social welfare. Holden Day: San Francisco, (1970).

\bibitem{SB}
Shepsle, K. and M. Bonchek, Analyzing politics: rationality, behavior, and institutions. Norton: New York and London, (1997).

\bibitem{St}
Straffin, P. Topics in the theory of voting. Birkhauser: Boston, (1980).

\bibitem{Suz}
Suzumura, K. Handbook of Social Choice and Welfare, Volume 1, Edited by K.J Arrow, A.K. Sen and K Suzumura. Elsevier Science B.V.(2002), 1-32.

\bibitem{Taylor}
Taylor, A. Mathematics and politics: strategy, voting, power, and proof. Springer-Verlag: New York, (1995). Second edition (2008).

\bibitem{Taylor3}
Taylor, Alan. A glimpse of impossibility, Perspectives on Political Science 26:23–26 (1997).

\bibitem{Taylor2}
Taylor, Alan. Social Choice and the Mathematics of Manipulation, Cambridge, England: Cambridge University Press (2005).

\bibitem{TZ}
Taylor, A. and W. Zwicker, Simple games: desirability relations, trading, and pseudoweightings. Princeton University Press: Princeton, NJ, (1999).

\bibitem{X}
Xia, Lirong. Computational Voting Theory: Game-Theoretic and Combinatorial Aspects. Dissertation submitted in partial fulfillment of the requirements for the degree of Doctor of Philosophy in the Department of Computer Science in the Graduate School of Duke University (2011).

\end{thebibliography}
\end{document}